\documentclass[aps,prd,twocolumn,superscriptaddress,nofootinbib,10pt]{revtex4-2}

\usepackage{amsmath,amssymb,mathrsfs,graphicx,bm,xcolor}
\usepackage{amsthm}
\usepackage[colorlinks=true,linkcolor=blue,citecolor=blue,urlcolor=blue]{hyperref}
\usepackage{orcidlink}
\usepackage{array}

\newtheorem{proposition}{Proposition}
\newtheorem{corollary}{Corollary}
\newtheorem{remark}{Remark}

\newcommand{\F}{F}
\newcommand{\Eeff}{\mathcal{E}_{K}}
\newcommand{\Q}{\mathcal{Q}}

\begin{document}

\title{Scalaron-modified null focusing and radial monotonicity in static \texorpdfstring{$f(R)$}{f(R)} gravity}

\author{Maickol Muñoz-Palma\orcidlink{0009-0008-1017-4406}}
\email{m.munoz58@ufromail.cl}
\affiliation{Departamento de Ciencias Físicas, Universidad de La Frontera, Casilla 54-D, 4811186 Temuco, Chile.}

\author{Francisco S. N. Lobo\orcidlink{0000-0002-9388-8373
}}
\email{fslobo@ciencias.ulisboa.pt}
\affiliation{Instituto de Astrof\'isica e Ci\^encias do Espa\c{c}o, Faculdade de Ci\^encias da Universidade de Lisboa, Edif\'icio C8, Campo Grande, P-1749-016 Lisbon, Portugal}
\affiliation{Departamento de F\'isica, Faculdade de Ci\^encias da Universidade de Lisboa, Edif\'icio C8, Campo Grande, P-1749-016 Lisbon, Portugal}

\author{Jean B\'aez Cuevas\orcidlink{0000-0002-3308-3362}}
\email{jean.baez@pucv.cl}
\affiliation{
Instituto de F\'isica, Pontificia Universidad Cat\'olica de Valpara\'iso, Casilla 4950, Valpara\'iso, Chile.}

\author{Francisco Tello-Ortiz \orcidlink{0000-0002-7104-5746}}
\email{francisco.tello@ufrontera.cl}
\affiliation{Departamento de Ciencias Físicas, Universidad de La Frontera, Casilla 54-D, 4811186 Temuco, Chile.}

\begin{abstract}
We derive an exact radial monotonicity law for static, spherically
symmetric spacetimes in metric \(f(R)\) gravity. For
\(ds^2=A(r)dt^2-dr^2/B(r)-r^2d\Omega^2\), the matter contribution and
the scalaron Hessian combine into an effective radial-convergence
numerator that fixes the derivative of \(\Q=B/A\). On every connected
static interval with \(A>0\), \(B>0\), and \(f_R\equiv df/dR>0\), its sign therefore
determines the monotonicity of \(\Q\). The integrated identity retains
the finite, generally nonzero value of \(B/A\) at a regular
nondegenerate Killing horizon; consequently, a fixed-sign convergence
condition orders the horizon endpoint ratios rather than excluding two
horizons. A zero-integral obstruction arises for equal endpoint values, including
boundaries where \(B\to0\) while \(A\) remains finite and nonzero. Saturation is equivalent to \(B/A=\mathrm{const}\), and in
vacuum requires a scalaron profile linear in the areal radius. We
illustrate the strict non-saturated branch, within the
General-Relativity sector, using the exact constant-density stellar
interior, and apply the equality and consistency diagnostics to the
constant-\(X\) and power-law solutions of Multam\"aki and Vilja. In
particular, in the Schwarzschild--de Sitter two-horizon parameter
range, one constant-\(X\) solution crosses \(f_R=0\) inside the
complete static patch, while a direct substitution into the original
radial equation exposes an unresolved exponent mismatch in the
displayed power-law family. The
results provide a model-independent static-sector diagnostic and a
precise starting point for a future horizon-regular treatment of
black-hole interiors.
\end{abstract}

\maketitle

\section{Introduction}
\label{sec:introduction}

The behavior of null congruences is central to the geometric analysis
of gravitational focusing, singularity formation, and horizon
structure. The Raychaudhuri equation provides a covariant evolution
law for the expansion of a congruence and relates its convergence to
the Ricci tensor projected along the corresponding null
direction~\cite{Raychaudhuri1955}. Together with suitable causal and
energy assumptions, it underlies the singularity theorems of Hawking
and Penrose~\cite{HawkingPenrose1970} and remains a fundamental tool
in the analysis of black-hole spacetimes~\cite{Wald1984}. Its role in
modified-gravity cosmology has also been explored directly, including
inhomogeneous \(f(R)\) backgrounds in which the curvature sector
modifies the standard focusing balance~\cite{Chakraborty2023}.

In General Relativity, the null convergence term is related directly
to the matter sector through the Einstein equations. This connection
has motivated horizon-counting and no-inner-horizon results based on
energy conditions and on the structure of the matter
fields~\cite{YangCaiLi2022,CaiLiYang2021,AnLiYang2021}. Related
theorems have been developed in gravitational theories containing
additional scalar or higher-curvature sectors, including
Einstein--Maxwell--Horndeski and charged Gauss--Bonnet
models~\cite{DeveciogluPark2022,DeveciogluPark2024}. These studies
illustrate a general feature of extended gravitational theories: once
additional dynamical degrees of freedom are present, the matter null
energy condition alone no longer exhausts the contributions governing
null focusing~\cite{Capozziello:2013vna,Capozziello:2014bqa}.

Metric \(f(R)\) gravity provides a particularly transparent framework
in which to investigate this effect. Its field equations contain
derivatives of \(f_R\equiv df/dR\), which represents the additional
scalar degree of freedom of the theory. These derivatives enter the
null projection additively with the matter stress tensor and may
reinforce or oppose ordinary matter focusing
\cite{SotiriouFaraoni2010,DeFeliceTsujikawa2010,
NojiriOdintsov2011,CapozzielloDeLaurentis2011}. The same theory has been studied extensively in cosmology and
inflation~\cite{SebastianiMyrzakulov2015}. Its static sector includes
constant- and nonconstant-curvature configurations, reconstructed
vacuum geometries, charged solutions, and scalar-supported black
holes~\cite{MultamakiVilja2006,SaffariRahvar2009,
SebastianiZerbini2011,Karakasis2021,Chaturvedi2023,
Hollenstein:2008hp,HurtadoArenas2020,Tang2021}. Topological black holes
and their thermodynamic properties have also been investigated in
\(R^2\) and \(F(R)\) gravity~\cite{Cognola:2015wqa,EslamPanah2023},
including electrodynamic configurations with a cosmological constant
and corrected thermodynamic quantities~\cite{Xu2024}.

The radial null projection is not itself new as an algebraic component
of the static field equations. In the vacuum construction of
Multam\"aki and Vilja, for example, the difference of two diagonal
field equations yields a first-order relation between
\(X\equiv A/B\) and the scalaron, and immediately gives
\(f_R''=0\) when \(X\) is constant~\cite{MultamakiVilja2006}.
Saffari and Rahvar subsequently emphasized that reconstructed spherical
solutions must be checked against the full set of independent field
equations and against any consistency relation introduced by extra
metric restrictions~\cite{SaffariRahvar2009}. More recently, Wang and Battista derived, in General Relativity,
radial energy-condition constraints on the product of the temporal and
radial metric coefficients~\cite{WangBattista2026}. In their
convention,
\(ds^2=-B_{\rm WB}dt^2+A_{\rm WB}dr^2+r^2d\Omega^2\), the monotonic
quantity is \(A_{\rm WB}B_{\rm WB}\); under the dictionary
\(B_{\rm WB}=A\) and \(A_{\rm WB}=1/B\), this becomes
\(A/B=1/\Q\).

The novelty pursued here is therefore not the claim that no radial
metric combination has previously appeared. It is the covariant
scalaron-modified convergence interpretation of the exact identity,
together with a systematic treatment of its boundary data. In
particular, when the redshift and radial functions are independent,
the relevant focusing information is distributed between the matter
sector, derivatives of \(f_R\), and the relative radial behavior of
the two metric functions. Recasting this information as a monotonicity
law for \(B/A\) provides a direct diagnostic for proposed static
solutions and makes clear exactly which boundary conclusions do, and
do not, follow.

A central issue is the distinction between two types of limiting
surface. If \(B\to0\) while \(A\) remains finite and nonzero, then
\(B/A\to0\). At a regular Killing horizon, however, both \(A\) and
\(B\) vanish, and their ratio generally approaches a finite, nonzero
limit determined by their leading radial derivatives. Any integrated
focusing relation must retain this contribution. Neglecting it would
incorrectly turn a monotonicity statement into an unconditional
restriction on the number of Killing horizons.

The purpose of the present work is to establish the corresponding
results rigorously on a connected static radial domain. We derive the
exact differential and integral identities, formulate the associated
monotonicity and endpoint theorems, and separate the matter and
scalaron contributions. We then provide a complete non-saturated
benchmark using the constant-density stellar interior, reinterpret the
constant-\(X\) vacuum class geometrically, and use the same radial
identity to test the exact solutions displayed by Multam\"aki and
Vilja. The analysis is deliberately distinct from a global
no-Cauchy-horizon theorem: the region between an outer event horizon
and an inner horizon is generally nonstatic and requires a
horizon-regular double-null formulation.

The discussion is restricted throughout to connected static intervals
on which the metric functions and \(f_R\) satisfy the regularity and
positive-coupling assumptions stated below. This setting permits a
controlled treatment of the radial problem without importing global
causal conclusions that are not contained in a static coordinate
construction.

Phrased as a single question, the
present work asks how much information about the global horizon
structure of a static configuration is already fixed by the
\emph{local} radial focusing properties of the geometry. The answer
developed below is precise but deliberately modest: within a
connected static interval, the sign of the matter-plus-scalaron
convergence numerator completely determines the monotonicity of
\(B/A\), and hence orders the finite endpoint data at any bounding
regular Killing horizons. It does not, by itself, determine whether a
second inner black-hole horizon exists, since that question concerns
the nonstatic interior rather than the static exterior. Distinguishing
sharply between what the local radial identity does and does not fix
is the central concern of this paper.

This work is organized as follows. In
Sec.~\ref{sec:field-equations}, we introduce the metric \(f(R)\) field
equations and the effective radial-convergence functional. In
Sec.~\ref{sec:null-geometry}, we construct the radial null frame and
derive the exact identity governing \(B/A\). The monotonicity and
boundary theorems are presented in Sec.~\ref{sec:theorems}. In
Sec.~\ref{sec:physical-content}, we separate the matter and scalaron
terms, discuss the General-Relativity limit, and exhibit a complete
non-saturated exact example. The equality classes and exact-solution
diagnostics are examined in Sec.~\ref{sec:applications}. Section
\ref{sec:scope} clarifies the relation to black-hole interiors, and
Sec.~\ref{sec:conclusions} summarizes the results. A direct check of
the Multam\"aki--Vilja power-law exponent relation is given in
Appendix~\ref{app:MV-power-law}.

\section{Field equations and effective radial convergence}
\label{sec:field-equations}

We consider metric $f(R)$ gravity, described by the action
\begin{equation}
 S=\frac{1}{16\pi}\int d^4x\,\sqrt{-g}\,f(R)
 +S_m[g_{\mu\nu},\Psi] .
 \label{eq:action}
\end{equation}
Variation with respect to the metric yields
\begin{equation}
 \F R_{\mu\nu}-\frac12 f g_{\mu\nu}
 -\nabla_\mu\nabla_\nu\F
 +g_{\mu\nu}\Box\F
 =8\pi T_{\mu\nu},
 \label{eq:field-equations}
\end{equation}
where $\F\equiv f_R=df/dR$,
$\Box\equiv g^{\alpha\beta}\nabla_\alpha\nabla_\beta$, and
$T_{\mu\nu}$ is the matter energy-momentum tensor.

We restrict attention to static, spherically symmetric geometries of
the form
\begin{equation}
 ds^2=A(r)\,dt^2-\frac{dr^2}{B(r)}-r^2d\Omega^2,
 \label{eq:metric}
\end{equation}
where $r$ is the areal radius and
$\F(r)=f_R\bigl(R(r)\bigr)$.
Throughout the analysis, $r>0$, and the functions
$A(r)$, $B(r)$, and $\F(r)$ are assumed to possess the differentiability
required by the field equations. Boundary quantities are understood as
one-sided limits from within the interval considered.

A \emph{static interval} $I$ is a connected open interval satisfying
\begin{equation}
 A(r)>0,\qquad B(r)>0,\qquad \F(r)>0.
 \label{eq:static-domain}
\end{equation}
The first two conditions ensure that $t$ is timelike, $r$ is spacelike,
and the radial null frame introduced below is real. The condition
$\F>0$ guarantees a positive effective gravitational coupling and is
a standard requirement in viable metric $f(R)$ models, although it
does not constitute a complete set of stability
conditions~\cite{SotiriouFaraoni2010,DeFeliceTsujikawa2010}.

The divided convergence relation and all sign theorems below depend
essentially on this assumption. A zero of \(\F\) may leave an undivided
component of the field equations algebraically meaningful, but it
prevents division by \(\F\) and invalidates any positive-coupling
monotonicity argument across that surface.

Let $K^\mu$ be a radial null vector. Contracting
Eq.~\eqref{eq:field-equations} with $K^\mu K^\nu$ eliminates the terms
proportional to $g_{\mu\nu}$ and gives
\begin{equation}
 \F R_{\mu\nu}K^\mu K^\nu
 =8\pi T_{\mu\nu}K^\mu K^\nu
 +K^\mu K^\nu\nabla_\mu\nabla_\nu\F .
 \label{eq:null-projection}
\end{equation}
We define the effective radial-convergence numerator as
\begin{equation}
 \Eeff\equiv
 8\pi T_{\mu\nu}K^\mu K^\nu
 +K^\mu K^\nu\nabla_\mu\nabla_\nu\F ,
 \label{eq:Eeff-definition}
\end{equation}
so that
\begin{equation}
 R_{\mu\nu}K^\mu K^\nu=\frac{\Eeff}{\F}.
 \label{eq:Ricci-projection}
\end{equation}
The matter and scalaron contributions therefore enter the radial Ricci
projection additively. Consequently, the sign of
$T_{\mu\nu}K^\mu K^\nu$ alone does not determine radial null
convergence in a generic $f(R)$ configuration~\cite{Capozziello:2013vna,Capozziello:2014bqa}.

\section{Radial null geometry and the exact identity}
\label{sec:null-geometry}

\subsection{Null frame and expansions}

On a static interval, define the future-directed outgoing and ingoing
radial null vectors
\begin{align}
 K^\mu&=\left(\frac{1}{\sqrt{A}},\sqrt{B},0,0\right),
 \label{eq:K-vector}\\
 L^\mu&=\frac12\left(\frac{1}{\sqrt{A}},-\sqrt{B},0,0\right).
 \label{eq:L-vector}
\end{align}
With the signature adopted in Eq.~\eqref{eq:metric}, they satisfy
\begin{equation}
 K^\mu K_\mu=L^\mu L_\mu=0,
 \qquad
 K^\mu L_\mu=1.
\end{equation}
This null frame is well defined on the open static interval, but it
generally does not extend regularly to a boundary at which \(A=0\).
Any limiting surface must therefore be interpreted using a regular
extension of the geometry rather than the normalization of
Eqs.~\eqref{eq:K-vector} and~\eqref{eq:L-vector} alone.

For any radial null vector \(N^\mu\), the expansion of the associated
congruence is obtained from the variation of the area
\(\mathcal{A}=4\pi r^2\) of the symmetry spheres,
\begin{equation}
 \theta_{(N)}
 =\frac{1}{\mathcal{A}}\mathcal{L}_{N}\mathcal{A}
 =\frac{2}{r}N^\mu\nabla_\mu r .
 \label{eq:general-expansion}
\end{equation}
Consequently, within the static interval,
\begin{equation}
 \theta_{(K)}=\frac{2\sqrt{B}}{r},
 \qquad
 \theta_{(L)}=-\frac{\sqrt{B}}{r}.
 \label{eq:expansions}
\end{equation}

As \(B\rightarrow0^+\), the areal-radius derivatives along this
particular static frame tend to zero. A zero of \(B\), however, does
not by itself determine the geometric or causal nature of the limiting
surface. If \(A\) remains finite and nonzero and the spacetime admits
a regular extension, the surface may correspond to a non-Killing
minimal sphere, such as a static wormhole throat~\cite{Morris:1988cz}.
 Modified-gravity wormhole solutions provide explicit settings
in which the effective curvature sector supports such geometries even
when the ordinary matter sector is nonexotic~\cite{Harko:2013yb}.
If \(A\) and \(B\) possess a regular common zero, the boundary may
instead be a Killing horizon of the static geometry. No classification as an event,
cosmological, or Cauchy horizon follows from the static frame alone.

The vector \(K^\mu\) is tangent to radial null geodesics but is not
affinely parametrized:
\begin{equation}
 K^\nu\nabla_\nu K^\mu=\kappa_K K^\mu,
 \qquad
 \kappa_K=\frac{\sqrt{B}\,A'}{2A}.
 \label{eq:nonaffinity}
\end{equation}
Within the open static interval, a positive regular rescaling of the
null direction multiplies both \(\theta_{(K)}\) and
\(R_{\mu\nu}K^\mu K^\nu\) by positive factors and therefore leaves
their signs unchanged. The normalization in
Eq.~\eqref{eq:K-vector} is adopted because it leads directly to the
simple radial identity derived below. Common-horizon limits will be
treated separately through the finite one-sided limit of \(B/A\).

\subsection{Raychaudhuri identity}

For a hypersurface-orthogonal null congruence, the non-affine
Raychaudhuri equation is
\begin{align}
 K^\mu\nabla_\mu\theta_{(K)}
 ={}&\kappa_K\theta_{(K)}
 -\frac12\theta_{(K)}^2
 -\sigma_{\mu\nu}\sigma^{\mu\nu}\notag\\
 &+\omega_{\mu\nu}\omega^{\mu\nu}
 -R_{\mu\nu}K^\mu K^\nu .
 \label{eq:Raychaudhuri}
\end{align}
The twist (vorticity) $\omega_{\mu\nu}$ vanishes by hypersurface orthogonality, while the shear
$\sigma_{\mu\nu}$ vanishes for a radial null congruence in spherical symmetry.
Using Eq.~\eqref{eq:expansions}, one obtains
\begin{equation}
 K^\mu\nabla_\mu\theta_{(K)}
 =\frac{B'}{r}-\frac{2B}{r^2}.
\end{equation}
Substitution into Eq.~\eqref{eq:Raychaudhuri} gives the purely
geometric identity
\begin{equation}
 R_{\mu\nu}K^\mu K^\nu
 =-\frac{B'}{r}+\frac{BA'}{rA}.
 \label{eq:RKK-geometric}
\end{equation}
This result is independent of the gravitational field equations.

For a radial scalar $\F(r)$, direct evaluation of its Hessian gives
\begin{equation}
 K^\mu K^\nu\nabla_\mu\nabla_\nu\F
 =B\left[
 \F''+\left(\frac{B'}{2B}-\frac{A'}{2A}\right)\F'
 \right].
 \label{eq:scalaron-Hessian}
\end{equation}
Combining Eqs.~\eqref{eq:Ricci-projection}
and~\eqref{eq:RKK-geometric} yields
\begin{equation}
 B'-\frac{A'}{A}B
 =-r\frac{\Eeff}{\F}.
 \label{eq:first-order-B}
\end{equation}

Introducing the radial metric ratio
\begin{equation}
 \Q(r)\equiv\frac{B(r)}{A(r)},
 \label{eq:Q-definition}
\end{equation}
we arrive at the central identity
\begin{equation}
 \Q'(r)
  =-\frac{r}{A(r)\F(r)}\,\Eeff(r).
 \label{eq:master-identity}
\end{equation}
Equation~\eqref{eq:master-identity} is exact on every connected static
interval satisfying Eq.~\eqref{eq:static-domain}. It follows from the
metric \(f(R)\) field equations together with the radial geometric
identity~\eqref{eq:RKK-geometric}.

Under a constant rescaling of the static time coordinate,
\(\Q\) is multiplied by a positive constant. Its monotonicity,
constancy, and the ordering or equality of its endpoint values are
therefore independent of this residual normalization freedom.
Equation~\eqref{eq:master-identity} forms the basis of the subsequent
monotonicity and boundary results.

The ratio $\Q$ compares the two
independent metric potentials at fixed areal radius: $A(r)$ fixes the
local redshift, through the norm of the static Killing vector, while
$B(r)$ fixes the radial proper-distance element and, through
Eq.~\eqref{eq:expansions}, the areal-radius variation of the radial
null congruences. Thus $\Q$ is not merely an auxiliary algebraic
combination. At finite, nonzero $A$, its vanishing is equivalent to
$B\to0$ and hence to the vanishing of the expansions in the static
null frame; the geometric interpretation of the limiting surface
still requires a regular extension, as discussed above. At a regular
Killing horizon, where $A$ and $B$ vanish together, the finite
nonzero limit of $\Q$ instead measures their relative rate of
vanishing. On the open static interval one may equivalently write
\(\Q'/\Q=B'/B-A'/A=-r\Eeff/(B\F)\), so the effective radial
convergence directly controls the mismatch between the logarithmic
radial variations of the two metric potentials. In Einstein gravity
this mismatch is governed by $\rho+p_r$, whereas in metric $f(R)$
gravity the scalaron Hessian provides an additional geometric
contribution.

\section{Monotonicity and boundary theorems}
\label{sec:theorems}

The regularity inherited from the metric functions and the scalaron
will be taken to make \(\Eeff\) continuous on every compact
subinterval of the static domain. This assumption is stronger than is
needed for the non-strict statements, but it makes the strict
alternatives unambiguous.

\begin{proposition}[Radial monotonicity]
	\label{prop:monotonicity}
	Let \(I\) be a connected static interval satisfying
	Eq.~\eqref{eq:static-domain}. If \(\Eeff\geq0\) on \(I\), then
	\(\Q=B/A\) is nonincreasing. If \(\Eeff\leq0\) on \(I\), then \(\Q\)
	is nondecreasing. More precisely, for any \(r_1<r_2\) in \(I\), if
	\(\Eeff\geq0\) on \([r_1,r_2]\) and \(\Eeff\) is not identically zero
	there, then
	\begin{equation}
		\Q(r_2)<\Q(r_1).
	\end{equation}
	Likewise, if \(\Eeff\leq0\) and is not identically zero on the same
	subinterval, then \(\Q(r_2)>\Q(r_1)\). Equivalently, without assuming
	continuity, strictness follows whenever \(\Eeff\) has the indicated
	strict sign on a set of nonzero measure.
\end{proposition}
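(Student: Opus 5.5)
The plan is to work directly from the master identity, Eq.~\eqref{eq:master-identity}, which holds pointwise on the static interval \(I\). On \(I\) we have \(r>0\), \(A>0\) and \(\F>0\), so the prefactor
\begin{equation}
 w(r)\equiv\frac{r}{A(r)\F(r)}
\end{equation}
is strictly positive. It is also continuous, because \(A\) and \(\F\) are continuous and nonvanishing there. Hence \(\Q'=-w\,\Eeff\), and the sign of \(\Q'\) is opposite to that of \(\Eeff\) at every point of \(I\).

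\textbf{Non-strict statements.} If \(\Eeff\geq0\) on \(I\), then \(\Q'\leq0\) on \(I\). Since \(I\) is connected and \(\Q\) is differentiable on it, the mean value theorem gives \(\Q(r_2)-\Q(r_1)=\Q'(\xi)(r_2-r_1)\leq0\) for any \(r_1<r_2\) in \(I\). So \(\Q\) is nonincreasing. The case \(\Eeff\leq0\) is identical with the signs reversed.

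\textbf{Strict statements.} Fix \(r_1<r_2\) in \(I\). The compact interval \([r_1,r_2]\) lies in \(I\), and by the standing assumption \(\Eeff\) is continuous there, so \(\Q'\) is continuous and the fundamental theorem of calculus applies:
\begin{equation}
 \Q(r_2)-\Q(r_1)=-\int_{r_1}^{r_2}\frac{r\,\Eeff(r)}{A(r)\F(r)}\,dr .
\end{equation}
Suppose \(\Eeff\geq0\) on \([r_1,r_2]\) and is not identically zero. Then there is a point \(r_*\) with \(\Eeff(r_*)>0\). By continuity, \(\Eeff>\tfrac12\Eeff(r_*)\) on some nondegenerate neighborhood \(J\subset[r_1,r_2]\) of \(r_*\). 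On the compact set \([r_1,r_2]\), the continuous positive function \(w\) is bounded below by some \(m>0\). The integrand is nonnegative everywhere and at least \(\tfrac12 m\,\Eeff(r_*)\) on \(J\). Therefore the integral is strictly positive, and \(\Q(r_2)<\Q(r_1)\). The case \(\Eeff\leq0\), not identically zero, follows by symmetry.

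\textbf{Measure-theoretic variant.} Without continuity, I would assume \(\Q\) is absolutely continuous, so that \(\Q'\) is integrable and the integral formula above still holds in the Lebesgue sense. Then \(w\Eeff\geq0\) almost everywhere and \(w\Eeff>0\) on a set of positive measure, because \(w>0\) everywhere. A nonnegative function that is strictly positive on a set of positive measure has strictly positive integral, which gives the same strict inequality.

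\textbf{Main obstacle.} The only delicate step is justifying this regularity: that \(\Q\) is absolutely continuous, equivalently that \(\Q'\in L^1_{\rm loc}\). This is exactly what the field-equation differentiability assumptions on \(A\), \(B\) and \(\F\) are meant to supply. Everything else reduces to the positivity of \(w\) on the open static interval, which is precisely why endpoints where \(A\) or \(\F\) vanish are excluded.
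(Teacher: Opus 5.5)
Your proof is correct and is essentially the paper's argument. Positivity of $r/(A\F)$ on $I$ fixes the sign of $\Q'$ through Eq.~\eqref{eq:master-identity}, and integrating over $[r_1,r_2]$ with $\Eeff$ of strict sign on a nondegenerate subinterval (or on a set of positive measure) gives strictness; your mean-value step, the explicit lower bound on the weight, and your note that the measure-theoretic clause needs $\Q'\in L^1_{\rm loc}$ only spell out what the paper's short proof uses tacitly.
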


\begin{proof}
	On \(I\), the factor \(r/(A\F)\) is strictly positive.
	Equation~\eqref{eq:master-identity} therefore fixes the sign of
	\(\Q'\). Integration over \([r_1,r_2]\) gives
	\begin{equation}
		\Q(r_2)-\Q(r_1)
		=-\int_{r_1}^{r_2}\frac{r}{A\F}\,\Eeff\,dr .
	\end{equation}
	Continuity and nontriviality imply that \(\Eeff\) has the relevant
	strict sign on a nonempty open subinterval, and hence on a set of
	positive measure, which makes the integral strictly nonzero.
\end{proof}

\begin{proposition}[Boundary identity]
	\label{prop:boundary}
	Let \(I=(a,b)\) be a connected static interval, or let \(a<b\) be two
	points in the closure of such an interval, and suppose that \(\Q\)
	admits finite one-sided limits at both endpoints. Then
	\begin{equation}
		\Q(b)-\Q(a)
		=-\int_a^b\frac{r}{A\F}\,\Eeff\,dr .
		\label{eq:boundary-identity}
	\end{equation}
	Here \(\Q(a)\) and \(\Q(b)\) denote the corresponding one-sided
	limits. If an endpoint is not contained in the open static interval,
	the integral is understood as a signed improper one-sided integral;
	absolute convergence is not required by the proposition.
\end{proposition}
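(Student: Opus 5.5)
The plan is to reduce the boundary identity to the interior integration already used in Proposition~\ref{prop:monotonicity} and then pass to one-sided limits. The integrand \(r\Eeff/(A\F)\) is finite only on the open interval. At an endpoint where \(A\to0\) it may be unbounded, so the endpoint integral has to be defined as a limit of interior integrals. That limit exists exactly because \(\Q\) has finite one-sided limits.

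\textbf{Step 1 (interior identity).} Fix \(a<\alpha<\beta<b\), so that \([\alpha,\beta]\) is a compact subinterval of the open static interval. There \(A>0\), \(B>0\) and \(\F>0\), and by the standing regularity assumption \(\Eeff\) is continuous. Hence the right-hand side of Eq.~\eqref{eq:master-identity} is continuous on \([\alpha,\beta]\), and \(\Q\) is \(C^1\) there. The fundamental theorem of calculus then gives
\begin{equation}
 \Q(\beta)-\Q(\alpha)=-\int_\alpha^\beta\frac{r}{A\F}\,\Eeff\,dr ,
\end{equation}
which is exactly the integrated relation used in the proof of Proposition~\ref{prop:monotonicity}.

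\textbf{Step 2 (passage to the endpoints).} First let \(\beta\to b^-\) with \(\alpha\) held fixed. The left-hand side tends to \(\Q(b)-\Q(\alpha)\) by the hypothesis that the one-sided limit exists. The right-hand side therefore also has a finite limit, and this limit is by definition the improper one-sided integral \(\int_\alpha^{b}\). Next let \(\alpha\to a^+\) and repeat the argument. Because the iterated limit of the left-hand side reduces to the separate limits \(\Q(b)-\Q(a)\), the two endpoint limits can equivalently be taken jointly. This yields Eq.~\eqref{eq:boundary-identity}, with the integral understood as a signed improper integral. No absolute integrability of \(r\Eeff/(A\F)\) is needed, because convergence is inherited directly from the left-hand side.

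\textbf{Step 3 (points in the closure).} Suppose instead that \(a<b\) are two points in the closure of the static interval. If both points lie in the interior, the statement is simply Step~1. If one point is a boundary point, apply Step~2 on the side where the limit is taken.

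\textbf{Main obstacle.} The step that needs care is the meaning of the integral at endpoints where the static frame degenerates, namely regular common zeros of \(A\) and \(B\), or zeros of \(B\) alone. There the prefactor \(1/A\), and possibly \(\Eeff\) itself through the \(K^\mu\) normalization, can blow up. The point to emphasise is that the integral is \emph{defined} through the limit of the interior identity. Consequently any cancellation between \(1/A\) and the vanishing of \(\Eeff\) near a Killing horizon is automatically accounted for, and no separate estimate near the horizon is required.
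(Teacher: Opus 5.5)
Your proof is correct and follows the paper's argument. Both integrate the master identity~\eqref{eq:master-identity} on a compact interior subinterval, pass to the one-sided limits, and define the signed improper integral by that same limit. Your iterated limits and your explicit case split for points in the closure only elaborate the paper's single joint limit $x\to a^+$, $y\to b^-$.
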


\begin{proof}
	For \(x,y\in I\), with \(a<x<y<b\), integration of
	Eq.~\eqref{eq:master-identity} gives
	\begin{equation}
		\Q(y)-\Q(x)
		=-\int_x^y\frac{r}{A\F}\,\Eeff\,dr .
	\end{equation}
	Taking \(x\to a^+\) and \(y\to b^-\), and using the assumed finite
	limits of \(\Q\), yields Eq.~\eqref{eq:boundary-identity}. The same
	limiting procedure defines the signed improper integral.
\end{proof}

\subsection{Finite-redshift zeros of the radial metric function}

Suppose that the endpoints \(a<b\) of a connected static interval
satisfy
\begin{equation}
	\lim_{r\to a^+}B(r)=\lim_{r\to b^-}B(r)=0,
\end{equation}
while \(A\) admits finite, strictly positive one-sided limits there.
Then
\begin{equation}
	\Q(a)=\Q(b)=0,
\end{equation}
and Eq.~\eqref{eq:boundary-identity} reduces to
\begin{equation}
	\int_a^b\frac{r}{A\F}\,\Eeff\,dr=0.
	\label{eq:zero-integral-nonkilling}
\end{equation}

If the limiting surfaces admit regular extensions as marginal
symmetry spheres, they may be interpreted as non-Killing marginal
boundaries. This interpretation is an additional geometric
assumption and does not follow from \(B\to0\) alone.

\begin{corollary}[Two finite-redshift zeros of \(B\)]
	\label{cor:nonkilling}
	Let a connected static interval be bounded by two zeros of \(B\) at
	which \(A\) has finite, strictly positive one-sided limits. If
	\(B>0\) in the interior and \(\Eeff\) is continuous, then \(\Eeff\)
	cannot be sign-definite. It must assume both positive and negative
	values in the interval.
\end{corollary}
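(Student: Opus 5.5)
The argument runs by contradiction from the boundary identity, Proposition~\ref{prop:boundary}, specialized to the finite-redshift situation in Eq.~\eqref{eq:zero-integral-nonkilling}. Since \(A\) has finite, strictly positive one-sided limits at \(a\) and \(b\) while \(B\to0\), the ratio \(\Q=B/A\) has finite one-sided limits \(\Q(a)=\Q(b)=0\). The hypotheses of Proposition~\ref{prop:boundary} therefore hold, and the signed improper integral of \(r\Eeff/(A\F)\) over \((a,b)\) vanishes.

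Now suppose that \(\Eeff\) were sign-definite, say \(\Eeff\geq0\) on \((a,b)\); the case \(\Eeff\leq0\) is symmetric. I would first dispose of the case \(\Eeff\equiv0\). By the master identity~\eqref{eq:master-identity}, \(\Q\) would then be constant on \((a,b)\), and its endpoint limit forces that constant to be zero. This gives \(B\equiv0\) in the interior, which contradicts \(B>0\).

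Otherwise there is a point \(r_0\in(a,b)\) with \(\Eeff(r_0)>0\). Continuity gives a compact subinterval \([r_1,r_2]\subset(a,b)\) on which \(\Eeff>0\). I would avoid any absolute-convergence issue for the improper integral by arguing directly with \(\Q\) rather than with the integral:
\begin{enumerate}
\item By Proposition~\ref{prop:monotonicity}, \(\Q\) is nonincreasing on all of \((a,b)\).
\item By the strict clause of the same proposition, \(\Q(r_2)<\Q(r_1)\).
\item Taking the one-sided limit at \(a\), monotonicity gives \(\Q(r_1)\leq\lim_{r\to a^+}\Q=0\), so \(\Q(r_2)<0\).
\end{enumerate}
This contradicts \(B>0\) and \(A>0\) in the interior, which together give \(\Q>0\) there. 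Equivalently, \(\Q\) nonincreasing with limit \(0\) at \(b\) and positive values inside is compatible only with \(\Q\) being constant.

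For \(\Eeff\leq0\), the function \(\Q\) is nondecreasing, and the same argument applied at the endpoint \(b\) gives the contradiction. So \(\Eeff\) must take both strictly positive and strictly negative values in \((a,b)\).

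The only point needing care is the passage to the endpoints. Positivity of \(r/(A\F)\) is guaranteed only on the open interval, and at the endpoints only the limits of \(\Q\) are assumed. The monotone-function argument above sidesteps this, because it uses Proposition~\ref{prop:monotonicity} on compact interior subintervals together with the existence of the one-sided limits, and never integrates up to the boundary. I expect no further obstacle.
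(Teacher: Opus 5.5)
Your proposal is correct and follows essentially the paper's argument: a fixed sign of \(\Eeff\) makes \(\Q\) monotonic by Proposition~\ref{prop:monotonicity}, and this is incompatible with \(\Q>0\) in the interior together with \(\Q\to0\) at both endpoints. Two small points: your case split (the \(\Eeff\equiv0\) case and the strict clause) is unnecessary, because a nonincreasing \(\Q\) with \(\lim_{r\to a^+}\Q=0\) already satisfies \(\Q\leq0\) on all of \((a,b)\); and in your ``equivalently'' aside, the endpoint that matters in the nonincreasing case is \(a\) (or both endpoints), not \(b\) alone.
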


\begin{proof}
	The function \(\Q=B/A\) is positive in the interior and tends to zero
	at both endpoints, so it cannot be monotonic on the complete interval.
	By Proposition~\ref{prop:monotonicity}, a nonnegative or nonpositive
	\(\Eeff\) would make it monotonic. Hence neither fixed sign is
	possible. Continuity then implies that \(\Eeff\) takes both signs.
\end{proof}

\subsection{Static patches bounded by Killing horizons}

For clarity, a \emph{regular nondegenerate static-patch Killing
	horizon} will mean an endpoint \(r_i\) admitting a regular extension
for which
\begin{align}
	A(r_i)&=B(r_i)=0,
	\qquad A'(r_i)B'(r_i)>0,\notag\\
	0&<\lim_{r\to r_i}\F(r)<\infty
	\label{eq:regular-horizon-definition}
\end{align}
and both metric functions have simple one-sided zeros when approached
from the static patch. Thus,
\begin{align}
	A(r)&=A'(r_i)(r-r_i)
	+O\bigl((r-r_i)^2\bigr),\notag\\
	B(r)&=B'(r_i)(r-r_i)
	+O\bigl((r-r_i)^2\bigr).
	\label{eq:horizon-expansions}
\end{align}
The corresponding endpoint datum is
\begin{equation}
	q_i\equiv
	\lim_{r\to r_i}\frac{B(r)}{A(r)}
	=\frac{B'(r_i)}{A'(r_i)}>0 .
	\label{eq:horizon-q}
\end{equation}
The value \(q_i\) depends on the constant normalization of the static
time coordinate, but all \(q_i\) acquire the same positive factor.
Their equality, ordering, and ratios are therefore independent of
that residual freedom.

For two such horizons \(r_1<r_2\) bounding a connected static patch,
Proposition~\ref{prop:boundary} gives
\begin{equation}
	q_2-q_1
	=-\int_{r_1}^{r_2}\frac{r}{A\F}\,\Eeff\,dr .
	\label{eq:horizon-slope-identity}
\end{equation}
The integral is understood in the signed improper sense. Its existence
follows from the finite one-sided limits of \(\Q\).

\begin{corollary}[Horizon-slope ordering]
	\label{cor:horizon-ordering}
	Let two regular, nondegenerate Killing horizons bound a connected
	static patch. If \(\Eeff\geq0\) throughout the patch, then
	\(q_2\leq q_1\). If \(\Eeff\leq0\), then \(q_2\geq q_1\). The
	inequality is strict if \(\Eeff\) is not identically zero.
\end{corollary}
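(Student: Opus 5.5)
The plan is to reduce the corollary to Propositions~\ref{prop:monotonicity} and~\ref{prop:boundary} by working on compact subintervals of the open patch and passing to the endpoints. Let the patch be \(I=(r_1,r_2)\), with both ends regular nondegenerate Killing horizons in the sense of Eq.~\eqref{eq:regular-horizon-definition}. By Eq.~\eqref{eq:horizon-q}, \(\Q\) has finite, positive one-sided limits \(q_1\) and \(q_2\) at the two ends. Hence Eq.~\eqref{eq:horizon-slope-identity} holds, with the integral taken in the signed improper sense.

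For the non-strict ordering I will avoid the improper integral and use monotonicity directly. Suppose \(\Eeff\geq0\) on \(I\). By Proposition~\ref{prop:monotonicity}, \(\Q\) is nonincreasing on \(I\), so \(\Q(y)\leq\Q(x)\) for all \(r_1<x<y<r_2\). Letting \(x\to r_1^+\) and \(y\to r_2^-\) preserves the non-strict inequality, which gives \(q_2\leq q_1\). The case \(\Eeff\leq0\) is the same argument with all inequalities reversed.

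The strict statement is the step that needs care, because strictness can be lost in a limit. I will get around this with a nested interior subinterval. If \(\Eeff\geq0\) is not identically zero on \(I\), continuity on compact subintervals gives a point \(r_*\in I\) with \(\Eeff(r_*)>0\). Choose \(r_1<x_0<r_*<y_0<r_2\). The strict part of Proposition~\ref{prop:monotonicity} then gives \(\Q(y_0)<\Q(x_0)\).

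Monotonicity on the two outer pieces \((r_1,x_0]\) and \([y_0,r_2)\) gives \(\Q(x)\geq\Q(x_0)\) for \(x<x_0\) and \(\Q(y)\leq\Q(y_0)\) for \(y>y_0\). In the limit these become \(q_1\geq\Q(x_0)\) and \(q_2\leq\Q(y_0)\). Chaining the three inequalities yields \(q_2\leq\Q(y_0)<\Q(x_0)\leq q_1\), so \(q_2<q_1\). The case \(\Eeff\leq0\) follows by reversing the signs. No absolute convergence of the improper integral is needed at any point, which is consistent with the remark following Proposition~\ref{prop:boundary}.
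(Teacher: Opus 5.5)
Your proposal is correct and follows the paper's route: apply Proposition~\ref{prop:monotonicity} on the open patch, then pass to the one-sided limits of \(\Q\) at the two horizons. Your nested-subinterval chain \(q_2\leq\Q(y_0)<\Q(x_0)\leq q_1\) supplies a detail the paper's one-line proof leaves implicit: a strict inequality between interior values only yields a non-strict one in the limit, so strictness needs exactly this kind of argument.
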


\begin{proof}
	The result follows by applying Proposition~\ref{prop:monotonicity} and
	taking the one-sided limits of \(\Q\) at the two horizons.
\end{proof}

\begin{corollary}[Equal-endpoint rigidity]
	\label{cor:equal-endpoints}
	Suppose that \(q_1=q_2\) and that \(\Eeff\) is everywhere
	nonnegative or everywhere nonpositive in the static patch. Then
	\begin{equation}
		\Eeff\equiv0,
		\qquad
		\Q\equiv q_1 .
		\label{eq:equal-endpoint-rigidity}
	\end{equation}
	In particular, a fixed-sign, nonvanishing \(\Eeff\) is incompatible
	with equal limiting values of \(B/A\) at the two horizons.
\end{corollary}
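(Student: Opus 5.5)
The plan is to combine the horizon-slope identity~\eqref{eq:horizon-slope-identity} with the monotonicity statement of Proposition~\ref{prop:monotonicity}, applied on interior compact subintervals. I treat the case \(\Eeff\geq0\) on the patch \((r_1,r_2)\); the case \(\Eeff\leq0\) follows by reversing every inequality.

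First, Proposition~\ref{prop:monotonicity} shows that \(\Q\) is nonincreasing on \((r_1,r_2)\). For any \(r_1<x<y<r_2\) this gives the chain
\begin{equation}
 q_1\ \geq\ \Q(x)\ \geq\ \Q(y)\ \geq\ q_2 ,
\end{equation}
where the outer inequalities come from taking the one-sided limits~\eqref{eq:horizon-q}. A monotone function on an open interval dominates its right-end limit and is dominated by its left-end limit, so these limits behave as stated. The hypothesis \(q_1=q_2\) then forces \(\Q(x)=\Q(y)=q_1\) for all interior \(x<y\), hence \(\Q\equiv q_1\) on the patch.

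Next I differentiate. Since \(\Q\) is constant, \(\Q'\equiv0\) on the open static interval. The master identity~\eqref{eq:master-identity} gives \(\Eeff=-A\F\,\Q'/r\), and on the static patch the prefactor \(A\F/r\) is finite and positive, so \(\Eeff\equiv0\). Alternatively, I can argue by contradiction: if \(\Eeff\) were not identically zero, continuity would give a compact subinterval \([x,y]\) on which \(\Eeff\geq0\) and \(\Eeff\not\equiv0\). The strict clause of Proposition~\ref{prop:monotonicity} would then give \(\Q(y)<\Q(x)\), which contradicts \(\Q(y)=\Q(x)\). The final sentence of the corollary is the contrapositive of this argument.

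I expect the main subtlety to be the endpoint step: justifying that the limits \(q_i\) bound the interior values of \(\Q\) without assuming convergence of the improper integral beyond what Proposition~\ref{prop:boundary} supplies. Monotonicity handles this cleanly, because monotone functions always have one-sided limits and the regular-horizon assumption makes those limits equal to \(q_i\). So I would rely on the order argument rather than on absolute integrability of \(r\Eeff/(A\F)\) near the horizons, where \(A\to0\).
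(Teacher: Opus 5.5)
Your proposal is correct and follows the paper's own argument. The sign of \(\Eeff\) makes \(\Q\) monotonic, equal finite endpoint limits then force \(\Q\) to be constant, and the master identity gives \(\Eeff\equiv0\). Your sandwich \(q_1\geq\Q(x)\geq\Q(y)\geq q_2\) simply writes out the paper's step that ``a monotonic function with equal finite endpoint limits is constant.''
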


\begin{proof}
	Under either sign assumption, \(\Q\) is monotonic. A monotonic
	function with equal finite endpoint limits is constant. The master
	identity then implies \(\Eeff=0\) throughout the static patch.
\end{proof}

\begin{proposition}[Saturation class]
	\label{prop:saturation}
	On a connected static interval,
	\begin{equation}
		\Eeff\equiv0
		\quad\Longleftrightarrow\quad
		\frac{B}{A}=C,
		\label{eq:saturation-equivalence}
	\end{equation}
	where \(C>0\) is constant.
\end{proposition}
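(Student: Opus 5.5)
The plan is to read both directions of \eqref{eq:saturation-equivalence} off the master identity \eqref{eq:master-identity}. On a connected static interval \(I\), condition \eqref{eq:static-domain} gives \(r>0\), \(A>0\) and \(\F>0\). The prefactor \(r/(A\F)\) is therefore finite and strictly positive everywhere on \(I\), and the identity \(\Q'=-\,r\Eeff/(A\F)\) can be inverted pointwise as
\begin{equation}
 \Eeff(r)=-\frac{A(r)\F(r)}{r}\,\Q'(r).
\end{equation}
Consequently \(\Eeff(r)=0\) holds at a point of \(I\) exactly when \(\Q'(r)=0\) there. No division by a possibly vanishing quantity occurs, because \(\F>0\) is built into the definition of a static interval.

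For the forward direction, assume \(\Eeff\equiv0\) on \(I\). Then \(\Q'\equiv0\) on \(I\). Since \(I\) is a connected open interval and \(\Q\) is differentiable there (\(A\) and \(B\) have the regularity assumed in Sec.~\ref{sec:field-equations}), the mean value theorem shows \(\Q\) is constant, say \(\Q\equiv C\). Equivalently, Proposition~\ref{prop:monotonicity} with both sign hypotheses makes \(\Q\) simultaneously nonincreasing and nondecreasing. Positivity of \(C\) follows because \(B>0\) and \(A>0\) on \(I\), so \(C=B/A>0\).

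For the reverse direction, assume \(B/A=C\) on \(I\) with \(C>0\) constant. Then \(\Q'\equiv0\), and the inverted identity above gives \(\Eeff\equiv0\) on \(I\).

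The only step needing care is justifying the inversion itself, and the most delicate point there is connectedness. On a disconnected domain, \(\Q\) could take different constant values on different components, so connectedness is exactly what yields a single constant \(C\). The inversion also requires the standing hypothesis \(\F>0\): if \(\F\) vanished somewhere, \(\Eeff\) would no longer be determined by \(\Q'\) at that point. I would add a remark that, under the residual rescaling \(t\to\lambda t\), the constant \(C\) changes by a positive factor, so the saturation class itself is normalization independent.
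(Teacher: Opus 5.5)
Your proof is correct and follows the paper's argument essentially step for step. Both directions are read off the master identity \eqref{eq:master-identity}, connectedness turns $\Q'\equiv0$ into a single constant $C$, and positivity of $A$ and $B$ gives $C>0$. One small refinement to your closing remark: the undivided form $A\F\Q'=-r\Eeff$ already yields the converse without dividing by $\F$, so the hypothesis $\F>0$ is genuinely needed only for the forward implication $\Eeff\equiv0\Rightarrow\Q'\equiv0$.
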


\begin{proof}
	If \(\Eeff\equiv0\), Eq.~\eqref{eq:master-identity} gives
	\(\Q'=0\), and connectedness implies \(\Q=C\). Conversely, if
	\(B/A=C\), then \(\Q'=0\), and the same identity, together with
	\(r>0\), \(A>0\), and \(\F>0\), gives \(\Eeff\equiv0\). Positivity of
	\(A\) and \(B\) implies \(C>0\).
\end{proof}

\begin{remark}
	These are monotonicity and boundary theorems, not an unconditional
	count of Killing horizons. A sign condition on \(\Eeff\) constrains
	the ordering of the finite endpoint data \(q_i\); it excludes two
	regular horizons only after an additional condition, such as
	\(q_1=q_2\), removes the boundary difference.
\end{remark}

{
	The complementary roles of bulk monotonicity and boundary data are
	summarized schematically in Fig.~\ref{fig:monotonicity-summary}.
	Panel~(a) illustrates the direct relation between the sign of
	\(\Eeff\) and the radial behavior of \(\Q=B/A\).
	Panel~(b) shows that, for a static patch bounded by regular
	nondegenerate Killing horizons, the same sign condition orders the
	finite endpoint values \(q_i\). Panel~(c) illustrates the distinct
	finite-redshift case in which \(\Q\) vanishes at both endpoints while
	remaining positive in the interior; the required change from
	increasing to decreasing behavior then forces a continuous
	\(\Eeff\) to take both signs.
}

\begin{figure*}[th!]
	\centering
	\includegraphics[width=0.9\textwidth]
	{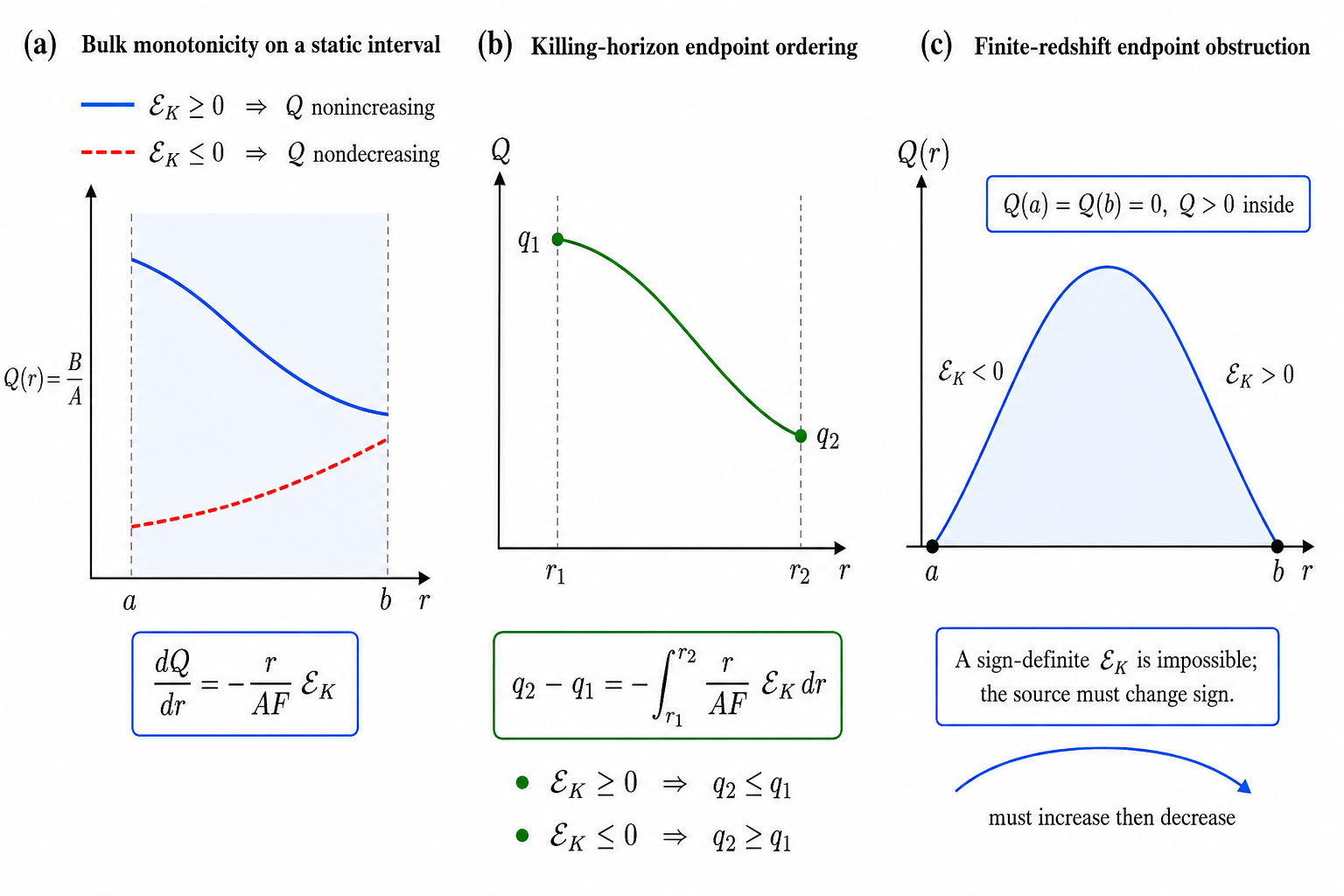}
	\caption{{
			Schematic summary of the radial monotonicity and boundary results.
			Panel~(a) illustrates the master identity
			\(\Q'=-(r/A\F)\Eeff\): on a connected static interval,
			\(\Eeff\geq0\) makes \(\Q=B/A\) nonincreasing, whereas
			\(\Eeff\leq0\) makes it nondecreasing.
			Panel~(b) shows the corresponding ordering of the finite endpoint
			data \(q_i=\lim B/A\) for a static patch bounded by regular
			nondegenerate Killing horizons. In particular,
			\(\Eeff\geq0\) implies \(q_2\leq q_1\), while
			\(\Eeff\leq0\) implies \(q_2\geq q_1\), in accordance with
			Eq.~\eqref{eq:horizon-slope-identity}.
			Panel~(c) illustrates the distinct case of two finite-redshift zeros
			of \(B\), for which \(\Q(a)=\Q(b)=0\) while \(\Q>0\) in the
			interior. The function must therefore increase somewhere and decrease
			somewhere, so a continuous \(\Eeff\) cannot be sign-definite and must
			assume both positive and negative values, as stated in
			Corollary~\ref{cor:nonkilling}. The curves are schematic and are
			intended to display the sign, monotonicity, and endpoint structure
			rather than a particular solution.}}
	\label{fig:monotonicity-summary}
\end{figure*}

\section{Matter and scalaron contributions}
\label{sec:physical-content}

Introduce the orthonormal static vectors
\begin{equation}
u^\mu=\left(\frac{1}{\sqrt A},0,0,0\right),
\qquad
s^\mu=\left(0,\sqrt B,0,0\right),
\end{equation}
which satisfy \(u^\mu u_\mu=1\), \(s^\mu s_\mu=-1\), and
\(u^\mu s_\mu=0\). The outgoing null vector is
\(K^\mu=u^\mu+s^\mu\). For a static anisotropic source with no radial
energy flux, energy density \(\rho\), and radial pressure \(p_r\),
\begin{equation}
T_{\mu\nu}K^\mu K^\nu=\rho+p_r.
\label{eq:matter-projection}
\end{equation}
Using Eq.~\eqref{eq:scalaron-Hessian}, the effective convergence
numerator becomes
\begin{equation}
\Eeff=8\pi(\rho+p_r)+B\F''
+\frac12\left(B'-\frac{BA'}{A}\right)\F'.
\label{eq:Eeff-anisotropic}
\end{equation}
The last term measures the mismatch between the logarithmic
derivatives of the radial and redshift functions, since its
coefficient is \(B(B'/B-A'/A)/2\). It vanishes in the
proportional-metric class \(B=CA\).

If \(f(R)\) is at least three times differentiable over the relevant
curvature range, then
\begin{equation}
\F'=f_{RR}R',
\qquad
\F''=f_{RR}R''+f_{RRR}(R')^2,
\label{eq:chain-rule}
\end{equation}
and therefore
\begin{align}
\Eeff={}&8\pi(\rho+p_r)
+B\Bigg[f_{RR}R''+f_{RRR}(R')^2\notag\\
&+\left(\frac{B'}{2B}-\frac{A'}{2A}\right)f_{RR}R'\Bigg].
\label{eq:Eeff-curvature}
\end{align}
The sign of \(\Eeff\) is controlled neither by \(R''\) nor by
\(f_{RR}\) alone, but by their combination with the scalaron profile,
the matter projection, and the local metric geometry.

\subsection{General-Relativity limit}

In the General-Relativity limit, including a possible cosmological
constant, \(\F=1\) and \(\F'=0\), so
\begin{equation}
\Eeff=8\pi(\rho+p_r).
\label{eq:GR-limit}
\end{equation}
The radial null energy condition determines the monotonicity of
\(B/A\). Vacuum and a purely radial Maxwell field satisfy
\(\rho+p_r=0\) and hence saturate the radial projection. By
Proposition~\ref{prop:saturation}, \(B/A=C\) on each connected static
interval, and a constant rescaling of the static time coordinate may
set \(C=1\). This reproduces the familiar gauge \(A=B\), but the
radial null equation alone does not determine the complete geometry
or its horizon structure.

This limit agrees with the recent analysis of static spherical energy
conditions by Wang and Battista~\cite{WangBattista2026}. Their radial
NEC condition makes
\(A_{\rm WB}B_{\rm WB}=A/B=1/\Q\) nondecreasing, which is equivalent
to the nonincreasing behavior of \(\Q\) obtained here when
\(\F=1\). Equations~\eqref{eq:Eeff-anisotropic} and
\eqref{eq:Eeff-curvature} display the additional scalaron source in
metric \(f(R)\) gravity.

Table~\ref{tab:wb_comparison} summarizes the principal points of
contact and departure from Ref.~\cite{WangBattista2026}.

\begin{table}[t]
\centering
\caption{Comparison with the static radial energy-condition analysis of
Wang and Battista~\cite{WangBattista2026}.}
\label{tab:wb_comparison}
\begin{tabular}{lll}
\hline\hline
 & \textbf{Wang--Battista} & \textbf{Present work} \\
\hline
Theory & General Relativity & Metric $f(R)$ gravity \\
Source & $\rho+p_r$ & $\Eeff$ (matter $+$ scalaron) \\
Quantity & $A_{\rm WB}B_{\rm WB}=1/\Q$ & $\Q=B/A$ \\
Boundary & NEC ordering & Slope ratios $q_i$ \\
Scope & Static GR & Static $f(R)$ \\
\hline\hline
\end{tabular}
\end{table}

As shown above, the two results coincide exactly in the $\F=1$ limit,
with $\Q$ and $1/\Q$ tracking the same monotonicity up to the
inversion implied by the different choice of ratio. The present
identity generalizes this result to metric $f(R)$ gravity and, in
addition, isolates explicitly the finite Killing-horizon endpoint
data $q_i$ relevant to the integrated radial identity.

\subsection{A non-saturated exact benchmark: the constant-density star}
\label{sec:constant-density}

A complete exact example with \(\F>0\), \(\Eeff\neq0\), and
nonconstant \(\Q\) is provided by the Schwarzschild interior solution
for a constant-density perfect fluid~\cite{Schwarzschild1916}. Let
\(M\) be the total mass and \(r_\star\) the stellar radius. For
\(0\leq r\leq r_\star\), define
\begin{align}
B(r)&=1-\frac{2Mr^2}{r_\star^3},\notag\\
A(r)&=\frac14\left[3\sqrt{1-\frac{2M}{r_\star}}
-\sqrt{1-\frac{2Mr^2}{r_\star^3}}\right]^2 .
\label{eq:interior-Schwarzschild-metric}
\end{align}
Regularity and finite central pressure require
\(r_\star>9M/4\). Writing
\(B_\star\equiv B(r_\star)=1-2M/r_\star\), the density and isotropic
pressure are
\begin{align}
\rho&=\frac{3M}{4\pi r_\star^3},\notag\\
p(r)&=\rho\,
\frac{\sqrt{B(r)}-\sqrt{B_\star}}
{3\sqrt{B_\star}-\sqrt{B(r)}} .
\label{eq:interior-Schwarzschild-matter}
\end{align}
For \(f(R)=R\), one has \(\F=1\) and
\begin{equation}
\Eeff=8\pi\bigl[\rho+p(r)\bigr]>0
\qquad (0<r<r_\star).
\label{eq:interior-Schwarzschild-Eeff}
\end{equation}
The metric ratio is
\begin{equation}
\Q(r)=
\frac{4B(r)}{\left[3\sqrt{B_\star}-\sqrt{B(r)}\right]^2},
\label{eq:interior-Schwarzschild-Q}
\end{equation}
and direct differentiation gives
\begin{equation}
\Q'(r)=
-\frac{48Mr\sqrt{B_\star}}
{r_\star^3\left[3\sqrt{B_\star}-\sqrt{B(r)}\right]^3}<0
\label{eq:interior-Schwarzschild-Qprime}
\end{equation}
for \(0<r<r_\star\). Thus the solution realizes the strict branch of
Proposition~\ref{prop:monotonicity}. It decreases from
\begin{equation}
\Q(0)=\frac{4}{\left(3\sqrt{B_\star}-1\right)^2}
\end{equation}
to \(\Q(r_\star)=1\), where it matches the exterior Schwarzschild
geometry in the standard normalization.

This benchmark is deliberately conservative: it is a complete exact
solution rather than a reconstruction based on a subset of equations.
Although the incompressible equation of state is physically idealized,
the solution provides a fully analytic and internally complete
benchmark for the strict monotonicity theorem. It demonstrates that
the theorem has nontrivial content away from the saturation class and
provides a reference against which genuinely modified-gravity examples
can be tested.

\section{Equality classes and exact-solution diagnostics}
\label{sec:applications}

\subsection{The proportional-metric vacuum class}
\label{sec:proportional-class}

The saturation class has a sharp vacuum consequence.

\begin{proposition}[Vacuum scalaron rigidity]
\label{prop:vacuum-rigidity}
Let a vacuum solution of metric \(f(R)\) gravity satisfy
\begin{equation}
B(r)=C A(r),
\qquad C>0,
\label{eq:proportional-metric}
\end{equation}
on a connected static interval. Then
\begin{equation}
\F''(r)=0,
\qquad
\F(r)=F_0+F_1r,
\label{eq:linear-scalaron}
\end{equation}
where \(F_0\) and \(F_1\) are constants on that interval. This is a
necessary consequence of the radial null projection, but is not by
itself sufficient to solve the full modified field equations.
\end{proposition}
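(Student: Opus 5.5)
The plan is to combine the saturation result with the explicit form of the effective numerator in vacuum. By Proposition~\ref{prop:saturation}, the hypothesis \(B=CA\) with constant \(C>0\) on a connected static interval is equivalent to \(\Eeff\equiv0\) there. This requires only \(A>0\), \(B>0\), \(\F>0\) and the master identity~\eqref{eq:master-identity}, all of which hold on a static interval. Hence no further use of the field equations is needed beyond the radial null projection~\eqref{eq:null-projection}.

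Next I would specialize Eq.~\eqref{eq:Eeff-anisotropic} to vacuum. There \(T_{\mu\nu}=0\), so \(\rho+p_r=0\), and \(\Eeff\) reduces to the scalaron Hessian~\eqref{eq:scalaron-Hessian}:
\[
\Eeff=B\F''+\tfrac12\Bigl(B'-\tfrac{BA'}{A}\Bigr)\F'.
\]
The coefficient of \(\F'\) is \(\tfrac12 B\,(B'/B-A'/A)=\tfrac12 B\,\Q'/\Q\). In the proportional class \(\Q=C\), so \(\Q'=0\) and this coefficient vanishes identically. Equivalently, one can substitute \(B'=CA'\) and \(B/A=C\) directly. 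Thus \(\Eeff=B\F''\).

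Combining the two steps gives \(B\F''\equiv0\), and since \(B>0\) on the static interval, \(\F''\equiv0\). Integrating twice on the connected interval yields \(\F=F_0+F_1r\), with constants \(F_0\) and \(F_1\) valid on that interval only. I would note that connectedness is what makes the constants global on the interval; different static intervals may carry different constants.

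The main subtlety is regularity. The statement implicitly requires \(\F\in C^2\), which is already assumed so that the field equations make sense. The final sentence of the proposition, that the condition is necessary but not sufficient, needs only a remark rather than a proof. The remaining diagonal and angular field equations, including the trace equation relating \(R\), \(f\) and \(\Box\F\), impose further constraints that \(\F''=0\) does not enforce, in the spirit of the consistency checks of Saffari and Rahvar.
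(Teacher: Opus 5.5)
Your proof is correct and is essentially the paper's argument: proportionality gives \(\Eeff\equiv0\) and also kills the connection term in the Hessian, leaving \(B\F''=0\) and hence a linear \(\F\). The only nuance is that the paper obtains \(\Eeff=0\) from \(R_{\mu\nu}K^\mu K^\nu=0\) in Eq.~\eqref{eq:RKK-geometric} together with the undivided projection, so it never divides by \(\F\); your route through Proposition~\ref{prop:saturation} uses the master identity and thus \(\F>0\), which is harmless here because the static interval assumes it.
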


\begin{proof}
For \(B=CA\), Eq.~\eqref{eq:RKK-geometric} gives
\(R_{\mu\nu}K^\mu K^\nu=0\), while
\(B'/B=A'/A\) makes the connection term in
Eq.~\eqref{eq:scalaron-Hessian} vanish. The vacuum null projection
therefore reduces to \(B\F''=0\). Since \(B>0\) in the static
interval, \(\F''=0\), and integration gives
Eq.~\eqref{eq:linear-scalaron}.
\end{proof}

This statement geometrically repackages the constant-\(X\) radial
equation already present in the construction of Multam\"aki and
Vilja~\cite{MultamakiVilja2006}. Its usefulness here is that it arises
as the equality case of the general matter-plus-scalaron monotonicity
law and therefore remains meaningful independently of the particular
reconstruction scheme.

In the presence of a static anisotropic source with no radial energy
flux, the proportional-metric condition instead gives
\begin{equation}
8\pi(\rho+p_r)+B\F''=0.
\label{eq:proportional-matter-balance}
\end{equation}
The matter and scalaron projections need not vanish separately; they
must cancel exactly.

Constant-curvature vacuum solutions have \(\F'=0\) and belong to the
special case \(F_1=0\). Schwarzschild--(anti-)de Sitter-type solutions
written in the gauge \(A=B\) therefore saturate the radial identity.
This fact does not determine their number of horizons. Their horizon
structure and thermodynamics require the remaining field equations
and global data, as illustrated by the broad literature on
constant-curvature, topological, and electrodynamic \(f(R)\) black
holes~\cite{Cognola:2015wqa,EslamPanah2023,Xu2024}.

\subsection{Constant-\texorpdfstring{\(X\)}{X} solutions of Multam\"aki and Vilja}
\label{sec:MV-constant-X}

Multam\"aki and Vilja~\cite{MultamakiVilja2006} use
\begin{equation}
ds^2=s(r)\,dt^2-p(r)\,dr^2-r^2d\Omega^2,
\end{equation}
with
\begin{equation}
A=s,
\qquad
B=\frac{1}{p},
\qquad
X(r)\equiv p(r)s(r)=\frac{A}{B}.
\label{eq:MV-dictionary}
\end{equation}
Hence
\begin{equation}
\Q=\frac{B}{A}=\frac{1}{X}.
\label{eq:Q-X}
\end{equation}
Whenever \(X=X_0\) is constant, \(\Q=1/X_0\) is constant. On every
connected static subinterval satisfying \(\F>0\), the solution belongs
to the saturation class, and the vacuum radial equation gives
\begin{equation}
\F''=0,
\qquad
\F(r)=F_0+F_1r.
\end{equation}

Two explicit nonconstant-curvature solutions displayed in
Ref.~\cite{MultamakiVilja2006} have \(X=2\). In the present notation,
\begin{align}
\text{(I)}\quad
&A(r)=1-\frac{2M}{r}-\frac{\Lambda r^2}{3},
&\F(r)&=1-\frac{r}{3M},
\label{eq:MV-I}\\
\text{(II)}\quad
&A(r)=1-\frac{\Lambda r^2}{3},
&\F(r)&=F_0r,
\label{eq:MV-II}
\end{align}
with
\begin{equation}
B(r)=\frac{A(r)}{2},
\qquad
R(r)=-\frac{1}{r^2}-2\Lambda .
\label{eq:MV-curvature}
\end{equation}
Although the redshift function of solution (I) has the
Schwarzschild--de Sitter form, the complete metric is not standard
Schwarzschild--de Sitter in the displayed normalization because
\(B=A/2\).

For both solutions,
\begin{equation}
\Q=\frac12,
\qquad
\Eeff=0.
\end{equation}
Indeed, \(\F''=0\) and \(B'/B=A'/A\), so the scalaron Hessian in
Eq.~\eqref{eq:scalaron-Hessian} vanishes. They realize the equality
class on every connected static subinterval satisfying the assumptions
of the theorem.

Solution (I) illustrates why \(\F>0\) must be checked throughout any
interval used in a boundary argument. For
\begin{equation}
0<9M^2\Lambda<1,
\label{eq:SdS-range}
\end{equation}
the redshift function has a black-hole root and a cosmological root
and is positive between them. At \(r=3M\),
\begin{equation}
A(3M)=\frac{1-9M^2\Lambda}{3}>0,
\qquad
\F(3M)=0.
\label{eq:F-zero-MV}
\end{equation}
Thus the zero of \(\F\) lies inside the two-horizon static patch. At
this surface the effective-coupling interpretation proportional to
\(1/\F\) and the usual Einstein-frame conformal transformation becomes singular, and
the field equations cannot be divided by \(\F\). The undivided radial
projection remains algebraically satisfied here because both sides
vanish, so no curvature singularity follows from this observation
alone. Nevertheless, the positive-coupling hypothesis and the divided
master identity fail across the surface. The complete patch therefore
cannot serve as a global example of the monotonicity theorem, although
each subinterval with \(\F>0\) remains admissible.

For solution (II), \(F_0>0\) and \(r>0\) imply \(\F>0\) wherever the
metric is static. It provides a direct realization of the equality
class, but
\begin{equation}
R(r)\sim-\frac{1}{r^2}
\qquad (r\to0^+)
\end{equation}
shows that saturation does not imply regularity.

\subsection{Power-law radial consistency test}
\label{sec:power-law}

Consider the power-law ansatz
\begin{equation}
A(r)=s_0r^m,
\quad
\F(r)=F_0r^n,
\quad
X(r)=\chi A(r),
\label{eq:power-law-ansatz}
\end{equation}
where \(X=A/B\) and \(s_0\), \(F_0\), and \(\chi\) are nonzero
constants. Then
\begin{equation}
B=\frac{1}{\chi},
\qquad
\Q=\frac{1}{\chi s_0r^m}.
\label{eq:power-law-Q}
\end{equation}
On a static interval with \(r>0\), positivity requires
\(s_0>0\), \(\chi>0\), and \(F_0>0\).

Since \(B\) is constant,
\begin{equation}
R_{\mu\nu}K^\mu K^\nu
=\frac{Bm}{r^2},
\label{eq:power-law-RKK}
\end{equation}
whereas
\begin{equation}
K^\mu K^\nu\nabla_\mu\nabla_\nu\F
=BF_0n\left(n-1-\frac{m}{2}\right)r^{n-2}.
\label{eq:power-law-Hessian}
\end{equation}
The vacuum radial equation therefore requires
\begin{equation}
m(n+2)=2n(n-1).
\label{eq:power-law-relation}
\end{equation}
This is a necessary radial condition; the remaining field equations
must still be imposed.

When Eq.~\eqref{eq:power-law-relation} holds,
\begin{equation}
\Eeff=BF_0m\,r^{n-2},
\qquad
\Q'=-\frac{m}{r}\Q,
\label{eq:power-law-check}
\end{equation}
in exact agreement with Eq.~\eqref{eq:master-identity}. Thus \(\Q\)
decreases for \(m>0\), increases for \(m<0\), and is constant for
\(m=0\).

The exponent relation displayed for solution III in
Ref.~\cite{MultamakiVilja2006},
\begin{equation}
n=\frac{2m(m-1)}{m-2},
\label{eq:MV-displayed-power-law}
\end{equation}
is not generically equivalent to
Eq.~\eqref{eq:power-law-relation}. Appendix~\ref{app:MV-power-law}
shows that the mismatch already appears upon direct substitution into
the original radial equation of that reference, independently of the
normalization coefficient \(\chi\). In view of the general consistency
issues that can arise when additional metric relations are imposed in
spherical \(f(R)\) reconstructions~\cite{SaffariRahvar2009}, the
published power-law family should be regarded as unresolved until the
exponent discrepancy is traced to a typographical error, an omitted
restriction, or a correction of the displayed solution. We do not
claim on the basis of the radial equation alone that the entire
reconstruction is invalid.

Since \(B=1/\chi\) is strictly positive and \(A=s_0r^m\) has no zero
at finite \(r>0\), this ansatz has neither a finite-radius zero of
\(B\) nor a finite-radius Killing horizon within its static domain.
Its role is therefore that of an algebraic consistency diagnostic,
not an independent horizon-exclusion example.

\section{Scope and relation to black-hole interiors}
\label{sec:scope}

The present framework applies to connected static intervals. This
restriction is geometric rather than merely technical. In a
Reissner--Nordstr\"om-type black hole, the region between the outer
event horizon and the inner Killing horizon is nonstatic: the areal
coordinate becomes timelike, and the conditions \(A>0\) and \(B>0\)
do not hold. The null frame of Eqs.~\eqref{eq:K-vector} and
\eqref{eq:L-vector} therefore does not provide a horizon-regular
description of that interior region.

Accordingly, Eqs.~\eqref{eq:master-identity} and
\eqref{eq:horizon-slope-identity} are not a no-Cauchy-horizon theorem.
They constrain the radial geometry within a static patch and provide
exact boundary information when that patch is bounded by regular
Killing horizons, as for a black-hole--cosmological-horizon domain.
They do not determine the global causal character of a second boundary
or establish that a marginal or inner Killing horizon is a Cauchy
horizon.

A genuine analysis of an inner Cauchy horizon requires a formulation
regular across the outer horizon. The present static result should be understood as the static reduction
of such a horizon-regular formulation rather than as a self-contained
alternative to it. Any consistent double-null treatment, when
specialized to a static spherically symmetric region and expressed in
the corresponding static normalization, should recover the content of
Eq.~\eqref{eq:master-identity} and
Propositions~\ref{prop:monotonicity}--\ref{prop:boundary}. In this
sense, the endpoint quantities \(q_i\) identified here provide natural
static-sector boundary data against which a horizon-regular extension
can be matched before addressing a possible second, inner horizon. With the
\((+---)\) signature, a convenient double-null form is
\begin{equation}
ds^2=2e^{-2\sigma(u,v)}\,du\,dv-r^2(u,v)d\Omega^2 .
\label{eq:double-null-metric}
\end{equation}
The two null expansions are proportional to \(\partial_v r\) and
\(\partial_u r\), respectively. The scalaron-modified Raychaudhuri and
cross-focusing equations can then be followed through the nonstatic
interior without relying on the singular normalization of a static
null frame at a horizon.

Any statement about a regular Cauchy horizon would require additional
assumptions linking the candidate global causal boundary to a regular
inner marginal or Killing horizon. Singular null boundaries, including
those associated with mass inflation, lie outside the present static
analysis.

The value of the static result is precisely this separation of scope.
It identifies the exact monotonic quantity, the necessary endpoint
terms, and the equality classes before any extension into the
black-hole interior is attempted. A future double-null treatment should
recover the master identity as its static reduction, but the present
paper supplies only the static-sector starting point for that broader
program.

\section{Conclusions}
\label{sec:conclusions}

We have established an exact radial monotonicity law for static,
spherically symmetric configurations in metric \(f(R)\) gravity. On a
connected interval with \(A>0\), \(B>0\), and \(f_R>0\), the sign of
the matter-plus-scalaron radial-convergence numerator fixes the
monotonicity of \(B/A\). In the General-Relativity limit this reduces
to the radial null-energy-condition result for the inverse metric
product, while in modified gravity the scalaron Hessian can reinforce,
cancel, or reverse the matter contribution.

The principal geometric point is the treatment of the interval
boundaries. A zero of \(B\) with finite nonzero \(A\) gives a vanishing
endpoint value of \(B/A\), so two such boundaries force the effective
convergence to change sign. At a regular nondegenerate Killing horizon,
however, the finite datum is the derivative ratio
\(B'(r_i)/A'(r_i)\). A fixed-sign convergence condition therefore
orders these endpoint values; it does not exclude two Killing horizons
unless an additional condition, such as equality of the two ratios,
removes the boundary difference.

Saturation is equivalent to a constant ratio \(B/A\). In vacuum this
requires a scalaron linear in the areal radius, which gives a geometric
interpretation of the constant-\(X\) equation used in earlier exact
reconstructions. The constant-density stellar interior supplies a
complete non-saturated benchmark: its positive radial convergence
makes \(B/A\) decrease strictly from the center to the stellar surface.
The Multam\"aki--Vilja applications then show both the utility and the
limitations of the diagnostic. One constant-\(X\) solution crosses
\(f_R=0\) inside its complete two-horizon static patch and hence falls
outside the positive-coupling theorem on that domain. For the displayed
power-law family, direct substitution into the original radial equation
produces an exponent condition different from the published one; the
status of that family remains unresolved pending a correction or an
additional restriction.

These results provide a practical hierarchy of checks for static
\(f(R)\) geometries: verify the sign and regularity of \(f_R\), compute
the effective radial convergence, retain the correct endpoint data,
and test any reconstructed ansatz against the undivided radial equation
before using the full field equations. Their scope remains the static
sector. Extending the analysis to inner horizons requires a
horizon-regular double-null system that treats both null expansions and
the scalaron dynamics through the nonstatic black-hole interior.

Beyond the horizon-regular double-null extension already outlined in
Sec.~\ref{sec:scope}, the present static framework admits several
further directions. A natural one is rotation: the loss of a single
areal radial coordinate in axisymmetric spacetimes requires
reformulating \(\Q\) in terms of the relevant null expansions of a
Kerr-like \(f(R)\) background, rather than as a simple ratio of two
metric functions. Another is the extension to general scalar-tensor
and Horndeski gravity, where the scalar field's kinetic and
non-minimal coupling terms enter the null projection in a manner
structurally similar to, but more involved than, the scalaron Hessian
of Eq.~\eqref{eq:scalaron-Hessian}. Identifying the corresponding
monotonic quantity and its Killing-horizon boundary data in these
broader theories would test whether the separation between local
radial monotonicity and global causal structure established here
persists beyond metric \(f(R)\) gravity.

\begin{acknowledgments}
FSNL acknowledges funding from the Funda\c{c}\~{a}o para a Ci\^{e}ncia e a Tecnologia (FCT) through national funds under the research grant UID/04434/2025 (DOI 10.54499/UID/04434/2025), and support from the FCT Scientific Employment Stimulus contract with reference CEECINST/00032/2018.
\end{acknowledgments}

\appendix

\section{Direct consistency check of the Multam\"aki--Vilja power-law family}
\label{app:MV-power-law}

The vacuum spherical equations of Multam\"aki and Vilja include the
radial relation~\cite{MultamakiVilja2006}
\begin{equation}
2\frac{X'}{X}
+r\frac{\F'}{\F}\frac{X'}{X}
-2r\frac{\F''}{\F}=0,
\label{eq:MV-original-radial}
\end{equation}
where \(X=ps=A/B\) and \(\F=f_R\). Whenever
\(2\F+r\F'\neq0\), this equation is equivalently
\begin{equation}
\frac{X'}{X}
=\frac{2r\F''}{2\F+r\F'}.
\label{eq:MV-radial-solved}
\end{equation}

For the power-law ansatz
\begin{equation}
A=s_0r^m,
\qquad
X=\chi A=\chi s_0r^m,
\qquad
\F=F_0r^n,
\end{equation}
one has
\begin{equation}
\frac{X'}{X}=\frac{m}{r},
\qquad
\frac{2r\F''}{2\F+r\F'}
=\frac{2n(n-1)}{(n+2)r},
\end{equation}
provided \(n\neq-2\). Equation~\eqref{eq:MV-radial-solved} therefore
requires
\begin{equation}
m(n+2)=2n(n-1),
\end{equation}
which is precisely Eq.~\eqref{eq:power-law-relation}. The exceptional
case \(n=-2\) does not evade the conclusion: direct substitution into
Eq.~\eqref{eq:MV-original-radial} leaves a nonzero term proportional to
\(\F''\), so no finite \(m\) solves the equation.

By contrast, solution III in the table of
Ref.~\cite{MultamakiVilja2006} displays
\begin{equation}
n=\frac{2m(m-1)}{m-2}.
\end{equation}
The two relations agree only on isolated parameter values, not as
identities. The mismatch is independent of the coefficient \(\chi\),
which cancels from \(X'/X\), and it arises before the remaining field
equations or the reconstructed \(f(R)\) are considered.

This check establishes a necessary inconsistency between the displayed
exponents and the original radial equation as printed. It does not by
itself determine whether the source is a typographical error in the
table, an omitted restriction, or a more substantial defect in the
reconstruction. Consistent with the caution advocated for constrained
spherical reconstructions in Ref.~\cite{SaffariRahvar2009}, the
appropriate conclusion is that the published power-law family requires
clarification and a full component-by-component verification.

\end{document}